\documentclass{article}
\usepackage[utf8]{inputenc}

\usepackage[utf8]{inputenc}

\usepackage{comment}

\usepackage{graphicx} 
\usepackage{amsmath,amsthm,verbatim,amssymb,amsfonts,amscd, graphicx}
\usepackage[utf8]{inputenc}
\usepackage{graphics}
\usepackage{hyperref}
\usepackage[baseline]{euflag}
\usepackage{enumitem}
\usepackage{apptools}
\usepackage{titlesec}
\usepackage{appendix}
\usepackage{xcolor}
\usepackage{dsfont}
\usepackage{authblk}

\AtAppendix{\counterwithin{lemma}{section}}

\topmargin0.0cm
\headheight0.0cm
\headsep0.0cm
\oddsidemargin0.0cm
\textheight23.0cm
\textwidth16.5cm
\footskip1.0cm
\theoremstyle{plain}

\newtheorem{theorem}{Theorem}
\newtheorem{corollary}[theorem]{Corollary}
\newtheorem{lemma}[theorem]{Lemma}

\newtheorem{remark}[theorem]{Remark}

\theoremstyle{definition}
\newtheorem{definition}[theorem]{Definition}

\newcommand{\RL}{\mathbb{R}}

\newcommand{\IPn} {
I_{P_n}
}
\newcommand{\IPu} {
I_{P^u}
}
\newcommand{\inn}[2] {
\langle #1 , #2 \rangle
}

\newcommand{\bergstrom} {Bergstr{\"o}m }

\title{Entropic versions of Bergstr{\"o}m's and Bonnesen's  inequalities}
\date{}

\author[1]{Matthieu Fradelizi}
\author[2]{Lampros Gavalakis}
\author[3]{Martin Rapaport}

\affil[1]{Univ Gustave Eiffel
                      }
                      
\affil[2]{University of Cambridge 
}

\affil[3]{
Carnegie Mellon University
                }

\begin{document}

\maketitle
 \footnotetext[1]{M.F. is with Univ Gustave Eiffel, Univ Paris Est Creteil, 
 CNRS, LAMA UMR8050 F-77447,
 Marne-la-Vall{\'e}e, France.\\
 email: matthieu.fradelizi@univ-eiffel.fr 
                      
L.G. is with the Department of Pure Mathematics and Mathematical Statistics, University of Cambridge,  UK and was supported by the Economic and Physical Sciences Research Council [Grant Ref:EP/Y028732/1].\\
email: lg560@cam.ac.uk  

M.R. is with the 
Department of Mathematical Sciences, Carnegie Mellon University, Pittsburgh, 15213, PA, United States. \\
email: mrapapor@andrew.cmu.edu }
 
\begin{abstract} 
We establish analogues of the \bergstrom and Bonnesen inequalities, related to determinants and volumes respectively, for the entropy power and for the Fisher information. The obtained inequalities strengthen the well-known convolution inequality for the Fisher information as well as the entropy power inequality in dimensions $d>1$, while they reduce to the former in $d=1$. Our results recover the original \bergstrom inequality and generalize a proof of Bergstr{\"o}m's inequality given by Dembo, Cover and Thomas. We characterize the equality case in our entropic Bonnesen inequality. 
\end{abstract}
\section{Introduction}

The analogy between entropy and volume has been observed as early as in the work of Costa and Cover \cite{costacover}, where the connection between the Entropy Power Inequality (EPI) and the Brunn-Minkowski inequality is examined (see \cite{dembocoverthomas} for an extensive review and a common proof of these inequalities). 

On the other hand, a number of determinant inequalities may be proved using entropy \cite{dembocoverthomas, cover:book}.

First, let us recall the \bergstrom and Bonnesen  inequalities in convex geometry, which will serve as the starting point for the analogies in information theory. The classical matrix form of the \bergstrom inequality is the following: 
\begin{theorem} \label{BergstromOrg} \cite{Bergstrom}
Let $A$ and $B$ be two $n \times n$ positive definite real symmetric matrices, and denote by $A_i$ and $B_i$ the two $(n - 1) \times (n - 1)$ matrices resulting from $A$ and $B$ by deleting the $i$-th row and the $i$-th column. Then we have
\[
\frac{\det(A + B)}{\det(A_i + B_i)} \geq \frac{\det(A)}{\det(A_i)} + \frac{\det(B)}{\det(B_i)},
\]
for every $i \in \{1, \dots, n\}$.
\end{theorem}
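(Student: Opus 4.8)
The plan is to reduce to the case $i=n$, rewrite the ratio $\det(A)/\det(A_i)$ as a Schur complement, and then exploit a variational characterization of that Schur complement which turns the inequality into the elementary fact that the minimum of a sum is at least the sum of the minima.

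First I would remove the index $i$. Conjugating $A$ and $B$ by the permutation matrix that transposes coordinates $i$ and $n$ leaves $\det(A)$, $\det(B)$, $\det(A+B)$ unchanged and carries $A_i, B_i, A_i+B_i$ to the corresponding submatrices obtained by deleting the last row and column. Hence it suffices to treat $i=n$, and I write $A_n$ for the matrix with its last row and column removed. Writing $A=\begin{pmatrix} A_n & a \\ a^T & \alpha\end{pmatrix}$ with $a\in\RL^{n-1}$ and $\alpha\in\RL$, and noting that every principal submatrix of a positive definite matrix is again positive definite (so the $A_n$ are invertible), the block-determinant identity gives $\det(A)=\det(A_n)\,(\alpha-a^TA_n^{-1}a)$, whence
\[
\frac{\det(A)}{\det(A_n)}=\alpha-a^TA_n^{-1}a .
\]

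The crux of the argument is the variational identity
\[
\frac{\det(A)}{\det(A_n)}=\min_{x\in\RL^{n-1}} q_A(x),\qquad q_A(x):=\begin{pmatrix} x \\ 1\end{pmatrix}^T A\begin{pmatrix} x \\ 1\end{pmatrix},
\]
which I would obtain by expanding $q_A(x)=x^TA_nx+2a^Tx+\alpha$ and minimizing the strictly convex quadratic: the optimizer is $x=-A_n^{-1}a$ and the optimal value is exactly the Schur complement $\alpha-a^TA_n^{-1}a$ computed above. The same formula applies to $B$ and to $A+B$, the relevant deleted submatrix of the latter being $A_n+B_n$. With this in hand the inequality is immediate: since the last coordinate of the test vector is pinned to $1$, linearity of the quadratic form in the matrix gives
\[
\frac{\det(A+B)}{\det(A_n+B_n)}=\min_{x}\bigl(q_A(x)+q_B(x)\bigr)\ \geq\ \min_{x}q_A(x)+\min_{x}q_B(x)=\frac{\det(A)}{\det(A_n)}+\frac{\det(B)}{\det(B_n)}.
\]

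I expect the only real work to be bookkeeping rather than conceptual difficulty: verifying the block-determinant identity and the positive definiteness of the principal submatrices that guarantees the minima exist and are finite. Everything else reduces to superadditivity of the minimum. I would also remark, to connect with the entropic viewpoint advertised in the abstract, that $\det(A)/\det(A_n)$ is precisely the conditional variance of the last coordinate of a centered Gaussian with covariance $A$ given the remaining coordinates; the variational identity is then the statement that the conditional variance equals the residual variance of the best linear predictor, and the inequality asserts that conditioning the sum cannot outperform a common linear predictor applied to each summand. This is the probabilistic shadow of the proof of Dembo, Cover and Thomas.
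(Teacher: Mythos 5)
Your proof is correct and complete: the reduction to $i=n$ by permutation similarity, the Schur-complement identity $\det(A)/\det(A_n)=\alpha-a^TA_n^{-1}a$, the variational formula as a minimum of the pinned quadratic $q_A$, and the superadditivity of the minimum are all sound, and the positive definiteness of the principal submatrices does guarantee that each minimum is attained and finite. It is, however, a genuinely different route from the one the paper takes: the paper does not prove Theorem~\ref{BergstromOrg} directly at all, but cites it and recovers it as the Gaussian special case of the entropic inequality of Theorem~\ref{B2}, whose proof runs through ``conditioning reduces entropy'' and a conditional entropy power inequality. Your closing remark correctly identifies the dictionary: $\det(A)/\det(A_n)$ is proportional to $e^{2h(X_n\mid X^{n-1})}$ for a centered Gaussian $X$ with covariance $A$, your variational identity is the statement that the conditional variance is the minimal mean-squared error over affine predictors, and your ``min of a sum $\geq$ sum of mins'' step plays the role of $h(X_n+Y_n\mid X^{n-1}+Y^{n-1})\geq h(X_n+Y_n\mid X^{n-1},Y^{n-1})$, i.e.\ of conditioning on the coarser statistic versus the finer one. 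What your argument buys is a short, self-contained, purely linear-algebraic proof with no probabilistic apparatus; what the paper's route buys is that the same skeleton, once phrased in terms of entropy, extends verbatim beyond Gaussians to arbitrary independent random vectors, which is precisely the content of inequality \eqref{entropyPowerBonnesen}.
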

The above theorem is, by setting $A = \lambda S$ and $B = (1-\lambda)T$, equivalent to the statement that 
$\frac{\det{(A)}}{\det{(A_i)}}$ is concave in $A$.
A proof of this statement using entropy was given in \cite{dembocoverthomas}, by considering the entropy of Gaussian random vectors with covariances $A$ and $B$ and observing that conditioning reduces entropy. 

The first question that motivated the current work is whether an entropic analogue of Theorem \ref{BergstromOrg} can be obtained. That is, can we derive an inequality that holds for all random variables and generalizes the inequality obtained in the proof of \cite[Theorem 30]{dembocoverthomas} for Gaussians? Our first main result, Theorem \ref{B2} provides such an inequality (see \eqref{entropyPowerBonnesen}). 

Let us also mention here Ky-Fan's inequality, which extends \bergstrom inequality:

\begin{theorem} \cite{Fan} \label{fanlabel}
Let $A$ and $B$ be two $n \times n$ positive definite real symmetric matrices, and denote by $A_{(k)}$ and $B_{(k)}$ the principal $(n-k) \times (n-k)$ matrices of $A$ and $B$ obtained by taking the first $n-k$ rows and $n-k$ columns from $A$ and $B$, respectively. Then we have
\[
\left( \frac{\det(A + B)}{\det(A_{(k)} + B_{(k)})} \right)^{\frac{1}{k}} \geq 
\left( \frac{\det(A)}{\det(A_{(k)}}\right)^{\frac{1}{k}} + 
\left( \frac{\det(B)}{\det(B_{(k)}}\right)^{\frac{1}{k}}
,
\]
for every $k \in \{1, \dots, n - 1\}$.
\end{theorem}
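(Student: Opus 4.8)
The plan is to prove Ky-Fan's inequality (Theorem \ref{fanlabel}) by reducing it to the concavity of a suitable function of the matrix, in direct analogy with how the matrix \bergstrom inequality follows from the concavity of $A \mapsto \det(A)/\det(A_i)$. Concretely, I would first establish that the map
\[
F_k(A) := \left( \frac{\det(A)}{\det(A_{(k)})} \right)^{\frac{1}{k}}
\]
is concave on the cone of positive definite symmetric matrices, and then derive the stated inequality by homogeneity. Indeed, since $F_k$ is positively homogeneous of degree $1$ (replacing $A$ by $tA$ multiplies numerator and denominator by $t^n$ and $t^{n-k}$ respectively, giving a factor $t^{k}$ inside, hence $t$ after the $1/k$ power), concavity gives $F_k(A+B) = 2 F_k(\tfrac{A+B}{2}) \ge 2 \cdot \tfrac12 (F_k(A) + F_k(B)) = F_k(A) + F_k(B)$, which is exactly the claimed inequality. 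So the entire content is the concavity of $F_k$.

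To prove concavity, the cleanest route I would take is the entropic/probabilistic one suggested by the discussion of \cite{dembocoverthomas} earlier in the excerpt. Write the coordinates as $(X, Y)$ where $X \in \RL^{n-k}$ corresponds to the first $n-k$ coordinates and $Y \in \RL^{k}$ to the last $k$. For a centered Gaussian vector with covariance $A$, the ratio $\det(A)/\det(A_{(k)})$ is precisely the determinant of the conditional covariance of $Y$ given $X$ (the Schur complement of the $A_{(k)}$ block), so
\[
\frac{1}{2}\log\!\left(\frac{\det(A)}{\det(A_{(k)})}\right)^{\frac{1}{k}} = \frac{1}{k}\, h(Y \mid X) - \frac{1}{k}\cdot\frac{k}{2}\log(2\pi e),
\]
up to the Gaussian normalization constant, where $h(Y\mid X)$ is the conditional differential entropy. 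Thus $\log F_k(A)$ is an affine function of the conditional entropy power of $Y$ given $X$. I would then take independent Gaussians $(X_1,Y_1)\sim \mathcal N(0,A)$ and $(X_2,Y_2)\sim \mathcal N(0,B)$ and apply the conditional Entropy Power Inequality to the sum, exploiting that conditioning on $X_1 + X_2$ reduces entropy relative to conditioning on the pair $(X_1,X_2)$; the $1/k$-normalized entropy power of $Y_1+Y_2$ given $X_1+X_2$ then dominates the sum of the individual normalized entropy powers, which is equivalent to superadditivity of $F_k$, i.e. concavity.

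Alternatively, and perhaps more transparently for a self-contained geometric argument, I would prove the concavity of $F_k$ directly by induction on $k$, bootstrapping from the $k=1$ case which is exactly the \bergstrom inequality of Theorem \ref{BergstromOrg}. The key algebraic identity I would exploit is the telescoping factorization
\[
\frac{\det(A)}{\det(A_{(k)})} = \prod_{j=1}^{k} \frac{\det(A_{(j-1)})}{\det(A_{(j)})},
\]
which expresses the ratio as a product of $k$ successive single-index \bergstrom ratios, each a Schur complement of the next. Combined with the arithmetic–geometric mean inequality, which upgrades superadditivity of each factor to superadditivity of the normalized geometric mean $\bigl(\prod_{j} r_j\bigr)^{1/k}$, this yields the $1/k$-power form. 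The main obstacle in this direct approach is that the individual factors in the telescoping product are \bergstrom ratios of the successively truncated matrices $A_{(j)}$ rather than of $A$ itself, so the superadditivity of each factor must be applied in a way that is compatible across the product; concretely, one must verify that the AM–GM step can be carried out coordinatewise without the truncation levels interfering. I expect this compatibility check — essentially a careful alignment of the Minkowski-type concavity across all $k$ Schur complements simultaneously — to be the delicate point, which is precisely why the entropic proof, where the conditional EPI handles all $k$ coordinates of $Y$ at once, is the more robust strategy to present.
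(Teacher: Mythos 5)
The paper states this theorem as a cited classical result (Ky Fan) and gives no proof of it; the closest it comes is the remark following Theorem \ref{B2}, which records an entropic analogue of Ky Fan's inequality for conditional entropies of $k$-coordinate blocks. Your primary route is essentially the Gaussian specialization of exactly that: writing $A$ as the covariance of a centered Gaussian $(X,Y)$ with $Y\in\mathbb{R}^k$, the Schur-complement identity $\det(A)/\det(A_{(k)})=\det(\Sigma_{Y|X})$ turns $F_k(A)$ into $(2\pi e)^{-1}e^{\frac{2}{k}h(Y|X)}$, and the chain ``conditioning on $X_1+X_2$ reduces entropy relative to conditioning on $(X_1,X_2)$, then apply the conditional EPI'' is precisely the argument of Theorem \ref{B2} with the last coordinate replaced by a $k$-block. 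That route is correct (note it yields superadditivity of $F_k$ directly, which together with $1$-homogeneity is equivalent to the concavity you set out to prove, so the reduction step is really running in the other direction). Your second, purely algebraic route is also correct, and in fact the ``delicate compatibility check'' you worry about is not an issue: writing $r_j(A)=\det(A_{(j-1)})/\det(A_{(j)})$ with $A_{(0)}=A$, each $r_j$ is a Bergstr{\"o}m ratio of the positive definite matrices $A_{(j-1)}, B_{(j-1)}$, and since truncation is linear, $(A+B)_{(j)}=A_{(j)}+B_{(j)}$, Theorem \ref{BergstromOrg} gives $r_j(A+B)\ge r_j(A)+r_j(B)$ for each $j$ independently; the superadditivity of the geometric mean, $\bigl(\prod_j(a_j+b_j)\bigr)^{1/k}\ge\bigl(\prod_j a_j\bigr)^{1/k}+\bigl(\prod_j b_j\bigr)^{1/k}$ (a one-line consequence of AM--GM after normalizing by the left-hand side), then finishes the proof with no interference between truncation levels. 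So either of your two arguments is a complete proof; the entropic one mirrors the paper's philosophy, while the algebraic one is more elementary and self-contained given Theorem \ref{BergstromOrg}.
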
 

\bergstrom inequality implies that if $\det(A_i) = \det(B_i)$ for some $i \in \{1, \dots, n\}$, then 
\begin{equation} \label{bergassumption}
\det(\lambda A + (1-\lambda)B)) \geq \lambda \det(A) + (1-\lambda) \det(B).
\end{equation}
That is, the well known concavity of $A \mapsto \det(A)^{\frac{1}{n}}$, or in equivalent terms, the inequality  $\det(\lambda A + (1-\lambda)B)^{\frac{1}{n}} \geq \lambda \det(A)^{\frac{1}{n}} + (1-\lambda) \det(B)^{\frac{1}{n}}$, can be improved to \eqref{bergassumption}, which fails in general, under the assumption that the matrices obtained by removing some column and the corresponding row have equal determinants.

Another theorem due to Bonnesen implies linear refinements of the Brunn-Minkowski inequality and can be seen as a weaker volume-analogue of \eqref{bergassumption}. Here and in what follows, we denote with $|K|_n$ the volume of a set $K$ in $\mathbb{R}^n$.

\begin{theorem}\label{bonnessenintro}
\cite{bonnesenp}
Let $A$ and $B$ be convex bodies in $\mathbb{R}^n$ and let $\theta \in S^{n-1}$ then 
$$
\frac{|A+B|_n}{\left(|P_{\theta^{\perp}}A|_{n-1}^{\frac{1}{n-1}}+|P_{\theta^{\perp}} B|_{n-1}^{\frac{1}{n-1}}\right)^{n-1}}\ge  \frac{|A|_{n}}{|P_{\theta^{\perp}} A|_{n-1}}+\frac{|B|_{n}}{|P_{\theta^{\perp}}B|_{n-1}},
$$
where \( P_\theta^\perp A \) denotes the orthogonal projection of \( A \) onto the hyperplane with normal vector $\theta$.
\end{theorem}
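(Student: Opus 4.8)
The plan is to run the optimal-transport proof of the Brunn--Minkowski inequality, but carrying the one-dimensional fibers of $A$ and $B$ along with it. After a rotation we may assume $\theta=e_n$ and write $P=P_{\theta^{\perp}}$ for the projection onto $\mathbb{R}^{n-1}=\{x_n=0\}$. Put $K=PA$ and $L=PB$, both convex bodies in $\mathbb{R}^{n-1}$, and for $x\in K$ let $\ell_A(x)$ denote the length of the fiber $\{t:(x,t)\in A\}$, defining $\ell_B$ on $L$ analogously. Each fiber is an interval because $A,B$ are convex, so $\ell_A$ is concave and nonnegative on $K$, $\int_K\ell_A=|A|_n$, and likewise for $B$; moreover $P(A+B)=K+L$. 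The single fact about Minkowski sums I would use at the fiber level is the pointwise superadditivity $\ell_{A+B}(x+y)\ge \ell_A(x)+\ell_B(y)$ for all $x\in K,\ y\in L$, which is immediate: adding the fiber interval of $A$ over $x$ to that of $B$ over $y$ produces an interval of length $\ell_A(x)+\ell_B(y)$ contained in the fiber of $A+B$ over $x+y$. Writing $c=\bigl(|K|_{n-1}^{1/(n-1)}+|L|_{n-1}^{1/(n-1)}\bigr)^{n-1}$ for the denominator on the left-hand side, the claim is exactly $|A+B|_n\ge c\bigl(|A|_n/|K|_{n-1}+|B|_n/|L|_{n-1}\bigr)$.

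Next I would introduce the Brenier map $T:K\to L$ transporting the uniform probability measure $|K|_{n-1}^{-1}\mathbf 1_K$ onto $|L|_{n-1}^{-1}\mathbf 1_L$; it is the gradient of a convex function, hence a.e.\ differentiable with symmetric positive semidefinite derivative $DT$, and the Monge--Amp\`ere equation for uniform densities forces $\det DT(x)=|L|_{n-1}/|K|_{n-1}$ for a.e.\ $x$. Define $\Phi:K\to K+L$ by $\Phi(x)=x+T(x)=\nabla\bigl(\tfrac12|x|^2+\phi(x)\bigr)$ with $T=\nabla\phi$; as a gradient of a convex function $\Phi$ is injective a.e., $D\Phi=I+DT$ is symmetric positive semidefinite, and clearly $\Phi(K)\subseteq K+L$. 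Since $\ell_{A+B}\ge 0$ and $\Phi(K)\subseteq K+L$, the change of variables $z=\Phi(x)$ gives $|A+B|_n=\int_{K+L}\ell_{A+B}\ge\int_{\Phi(K)}\ell_{A+B}=\int_K \ell_{A+B}(\Phi(x))\det D\Phi(x)\,dx$, and the fiber superadditivity applied to the decomposition $\Phi(x)=x+T(x)$ bounds this below by $\int_K\bigl(\ell_A(x)+\ell_B(T(x))\bigr)\det(I+DT(x))\,dx$.

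The final step inserts the two determinant facts. Minkowski's determinant inequality for the $(n-1)\times(n-1)$ positive semidefinite matrices $I$ and $DT$ gives $\det(I+DT)^{1/(n-1)}\ge 1+(\det DT)^{1/(n-1)}=1+(|L|_{n-1}/|K|_{n-1})^{1/(n-1)}$, hence the pointwise bound $\det(I+DT)\ge c/|K|_{n-1}$, which is exactly the normalizing constant in the statement. As the weight $\ell_A(x)+\ell_B(T(x))$ is nonnegative I may pull this bound out of the integral, and a change of variables $y=T(x)$ (using $\det DT\equiv |L|_{n-1}/|K|_{n-1}$) turns $\int_K\ell_B(T(x))\,dx$ into $(|K|_{n-1}/|L|_{n-1})|B|_n$. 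Collecting terms yields $|A+B|_n\ge \tfrac{c}{|K|_{n-1}}\bigl(|A|_n+\tfrac{|K|_{n-1}}{|L|_{n-1}}|B|_n\bigr)=c\bigl(|A|_n/|K|_{n-1}+|B|_n/|L|_{n-1}\bigr)$, which is the theorem after dividing by $c$.

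The main obstacle is not the geometry but the analytic regularity underpinning the transport step: one must justify the area (change-of-variables) formula for the a.e.-defined, a.e.-injective map $\Phi$ together with the a.e.\ validity of the Monge--Amp\`ere Jacobian identity, which I would handle via the standard McCann regularity theory for Brenier maps (or, to stay elementary, by replacing $T$ with the triangular Knothe map and using superadditivity of the geometric mean in place of Minkowski's inequality, then approximating). The one genuinely clever point is recognizing that the correct quantity to divide by is $c$, the Brunn--Minkowski lower bound for $|P(A+B)|_{n-1}$: it is precisely the pointwise lower bound on $\det D\Phi$ produced by the transport map, so the whole argument is the transport proof of Brunn--Minkowski with the fiber lengths carried through. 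In particular, for $n=2$ the projections are intervals, $c=|K|_1+|L|_1$ is exact, and the statement collapses to the superadditivity of the average chord length in the direction $\theta$.
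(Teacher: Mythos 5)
Your proof is correct in substance, but be aware that the paper does not prove this theorem at all: it is quoted as a classical result with a citation to Bonnesen, so there is no in-paper argument to compare against. Your route --- running the optimal-transport proof of Brunn--Minkowski on the projections $K=P_{\theta^{\perp}}A$, $L=P_{\theta^{\perp}}B$ while carrying the fiber lengths through via the superadditivity $\ell_{A+B}(x+y)\ge \ell_A(x)+\ell_B(y)$ --- is sound and the bookkeeping checks out: the Monge--Amp\`ere identity $\det DT=|L|_{n-1}/|K|_{n-1}$ combined with Minkowski's determinant inequality gives precisely the pointwise bound $\det(I+DT)\ge c/|K|_{n-1}$ with $c$ the Brunn--Minkowski lower bound for $|P_{\theta^{\perp}}(A+B)|_{n-1}$, and the pushforward property converts $\int_K \ell_B\circ T$ into $(|K|_{n-1}/|L|_{n-1})|B|_n$ as you state. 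This differs from the classical proofs (Bonnesen's original slab argument, or a direct fiberwise application of one-dimensional Brunn--Minkowski), but it is a legitimate modern alternative, and the Knothe-map variant you mention is the standard way to keep it elementary. The one step you should spell out in a final write-up is the weighted Jacobian inequality $\int_{\Phi(K)}g\,dz\ \ge\ \int_K g(\Phi(x))\,\det D_{\mathrm{ac}}\Phi(x)\,dx$ for $\Phi$ the gradient of a convex function and $g\ge 0$: since $\Phi_{\#}(dx)$ need not be absolutely continuous, this is an inequality rather than an equality, and it is exactly the lemma on which the whole transport argument rests; fortunately it is standard (McCann) and its direction is the one you need.
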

Let $K$ and $L$ be two convex bodies in $\mathbb{R}^n$ such that $\left| P_{\theta^{\perp}}K \right|_{n-1} = \left| P_{\theta^{\perp}} L \right|_{n-1}$ for some $\theta \in S^{n-1}$. Then, applying Bonnesen's inequality above to $A=(1-\lambda)K$ and $B=\lambda L$ we deduce that 
\begin{equation} \label{volumeimprov}
|(1-\lambda)K+\lambda L|_n\ge (1-\lambda)|K|_{n}+\lambda|L|_{n},
\end{equation}
which is the linear improvement of Brunn-Minkowski's inequality. 
It would be natural to conjecture that a stronger Bonnesen's inequality in the form
\begin{equation} \label{falseconj} 
\frac{|A+B|_n}{|P_{\theta^{\perp}}(A+ B)|_{n-1}}\ge  \frac{|A|_{n}}{|P_{\theta^{\perp}} A|_{n-1}}+\frac{|B|_{n}}{|P_{\theta^{\perp}}B|_{n-1}}
\end{equation}
holds for any convex bodies $A$ and $B$ in $\mathbb{R}^n$ but, in \cite{FMMZ}, it was disproved  in general for $n\ge3$ and proved for $A$ and $B$ being zonoids in $\mathbb{R}^3$.

Connections between this type of inequalities in convex geometry and matrix inequalities are studied in \cite{saorin}.

In view of the improvements \eqref{bergassumption} and \eqref{volumeimprov} and the close connection between the Brunn-Minkowski and EPI, it is natural to ask under which assumptions one can improve the EPI in an analogous manner. 
We recall here, that one of the equivalent formulations of the EPI states that, if $X,Y$ are independent random vectors in $\mathbb{R}^n$, then  
$$
e^{\frac{2}{n}h(X+Y)} \geq e^{\frac{2}{n}h(X)} + e^{\frac{2}{n}h(Y)}.
$$

In Corollary \ref{bonnesen} below, we show that if two $n$-dimensional random vectors have some $n-1$-dimensional marginal with the same entropy, then the EPI can be improved by removing the $\frac{1}{n}$ factor from the exponent. 

Moreover, we characterize the equality case, which turns out to be if and only if $X,Y$ are Gaussians with the same covariance matrix up to the last element of the diagonal. This is in complete analogy with the equality case in \eqref{bergassumption}.

As an analogue to the Fisher information inequality 
$$
I(X+Y)^{-1} \geq I(X)^{-1} + I(Y)^{-1},
$$
Dembo, Cover, Thomas \cite{dembocoverthomas} asked whether the inequality 
\begin{equation} \label{dctC}
\frac{|K+L|}{|\partial(K+L)|} \geq \frac{|K|}{|\partial K|}+ \frac{|L|}{|\partial L|}
\end{equation}
holds true for every convex bodies $K$ and $L$ in $\mathbb{R}^n$. More generally, Vitali Milman  asked if 
\begin{equation} \label{mixedvol}
\frac{V_k(K+L)}{V_{k-1}(K+L)} \geq \frac{V_k(K)}{V_{k-1}(K)}+ \frac{V_k(L)}{V_{k-1}(L)},
\end{equation}
holds true, where $V_k$ denote the mixed volumes. In \cite{fmg}, the latter was shown to hold true if and only if $k=1,2$, implying that \eqref{dctC} holds true only if $n=2$. 

In view of the analogy between entropy and volume, our entropic \bergstrom inequality \eqref{entropyPowerBonnesen} also serves as an entropic analogue of \eqref{dctC}. As we will see, this is always true, even without convexity assumptions. Moreover, it is in the analogous form of the volume inequality \eqref{falseconj}, which is not true, i.e. the entropic analogue is always true even though the volume version fails in general. This should not be surprising, since entropy usually serves as a more flexible analogue of volume. This is often the case in discrete settings as well; it was recently highlighted by the breakthrough proof of Marton's conjecture by Gowers, Green, Manners and Tao \cite{marton1, marton2, revisited}, where the observation that entropy behaves well under group homomorphisms, in contrast to cardinality, turned out to be crucial. 

In the same spirit, the following  related result addresses convex bodies \( K \) and \( L \) possessing equal maximal hyperplane sections in a given direction. 
\begin{theorem} \cite{bonnesens}
Let $K$ and $L$ be two convex bodies in $\mathbb{R}^{n}$ such that  
\[
\sup_r \left|K \cap (\theta^\perp + r\theta)\right|_{n-1} = \sup_s \left|L \cap (\theta^\perp + s\theta)\right|_{n-1}
\]
for some \( \theta \in \mathbb{S}^{n-1} \), then for every \( \lambda \in (0, 1) \) we have:
\[
 \left|(1 - \lambda)K + \lambda L \right|_{n} \geq (1 - \lambda) |K|_{n} + \lambda |L|_{n}.
\]
\end{theorem}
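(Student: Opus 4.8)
The plan is to reduce the $n$-dimensional statement to a one-dimensional comparison of the section profiles, via Fubini and the coarea (layer-cake) formula, using the $(n-1)$-dimensional Brunn--Minkowski inequality on the hyperplane slices. After a rotation we may assume $\theta = e_n$ and write points as $(y,t)$ with $y \in \mathbb{R}^{n-1}$, $t \in \mathbb{R}$. Set $f(t) = |K \cap (\theta^\perp + t\theta)|_{n-1}$, $g(s) = |L \cap (\theta^\perp + s\theta)|_{n-1}$ and $h(u) = |((1-\lambda)K + \lambda L) \cap (\theta^\perp + u\theta)|_{n-1}$. By Fubini, $|K|_n = \int f$, $|L|_n = \int g$ and $|(1-\lambda)K+\lambda L|_n = \int h$, so it suffices to compare these integrals. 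By Brunn--Minkowski in $\mathbb{R}^{n-1}$ the functions $\phi := f^{1/(n-1)}$ and $\psi := g^{1/(n-1)}$ are concave on their (interval) supports, and the hypothesis says they share a common maximal value, with $M := \sup f = \sup g$.

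First I would establish the slicing inequality. If $u = (1-\lambda)t + \lambda s$ then the slice of the sum contains the Minkowski combination of the slices,
\[
((1-\lambda)K+\lambda L)\cap(\theta^\perp + u\theta) \supseteq (1-\lambda)\bigl(K\cap(\theta^\perp+t\theta)\bigr) + \lambda\bigl(L\cap(\theta^\perp+s\theta)\bigr),
\]
so applying Brunn--Minkowski in $\mathbb{R}^{n-1}$ to the right-hand side yields the pointwise bound
\[
h\bigl((1-\lambda)t+\lambda s\bigr) \geq \bigl((1-\lambda)\phi(t) + \lambda\psi(s)\bigr)^{n-1}.
\]

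Then I would pass to super-level sets. Writing $\int f = \int_0^\infty |\{f > y\}|\,dy$, and similarly for $g$ and $h$, the concavity of $\phi,\psi$ (and of $h^{1/(n-1)}$) makes every super-level set an interval. For a level $y \in (0,M)$ put $z = y^{1/(n-1)}$; then $\{\phi > z\}$ and $\{\psi > z\}$ are \emph{nonempty} intervals precisely because $z$ lies below the common maximum, and the slicing inequality gives $(1-\lambda)\{\phi > z\} + \lambda\{\psi > z\} \subseteq \{h > y\}$. Taking lengths (the Minkowski sum of intervals is additive in length, and its length bounds the measure of $\{h>y\}$ from below) yields $|\{h>y\}| \geq (1-\lambda)|\{f>y\}| + \lambda|\{g>y\}|$ for every $y \in (0,M)$. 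Since $f$ and $g$ are supported, as functions of the level, on $[0,M]$, integrating this over $(0,M)$ and recombining via the layer-cake formula gives $\int h \geq (1-\lambda)\int f + \lambda \int g$, which is the claim.

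The main obstacle, and the only place the hypothesis is used, is guaranteeing that the layerwise super-additivity $|\{h>y\}|\geq(1-\lambda)|\{f>y\}|+\lambda|\{g>y\}|$ holds across the entire range of levels carrying the mass of $f$ and $g$. This is exactly what the equal-maximal-section assumption secures: for every $y$ below the common maximum $M$ both slices are nonempty, so no level contributes to $(1-\lambda)|K|_n+\lambda|L|_n$ without a matching nonempty pair of intervals whose Minkowski sum lies under the graph of $h$. Without equal maxima there would be a band of levels on which one super-level set is empty, breaking the additivity at those levels and leaving the Brunn--Minkowski deficit uncontrolled.
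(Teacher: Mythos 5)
The paper states this theorem only as a cited background result (from \cite{bonnesens}) and gives no proof of its own, so there is nothing to compare against. Your argument is correct and is essentially the classical proof: the slice containment plus Brunn--Minkowski in $\mathbb{R}^{n-1}$ gives the pointwise bound $h((1-\lambda)t+\lambda s)^{1/(n-1)}\geq(1-\lambda)\phi(t)+\lambda\psi(s)$, the super-level sets of the concave profiles are intervals, and the equal-maxima hypothesis guarantees both intervals are nonempty for every level $y<M$, so that the Minkowski sum of intervals has additive length and the layer-cake integration over $(0,M)$ closes the argument. You also correctly identify that this nonemptiness is the sole place the hypothesis enters; no gaps.
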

We refer to the latter result as the Bonnesen inequality for sections (rather than projections which refers to Theorem \ref{bonnessenintro}). \\

As a corollary of our main result we also obtain an analogous improvement of the isoperimetric inequality for entropies. 

Finally, in view of the different forms of the \bergstrom and Bonnesen inequalities, we establish a different analogue for the Fisher information. To that end, in Section \ref{fisherinfosec} we define a conditional version of the Fisher information and prove an inequality, which resembles Bergstr{\"o}m's inequality. This turns out to be stronger than (in the sense that it implies) the convolution inequality for Fisher information (sometimes referred to as Blachman-Stam inequality). We do  not know whether this is stronger than (in that it implies) our entropic \bergstrom inequality.

\vspace{0.5 cm}

{\bf Notation.} We write capital letters $X,Y$ for random variables (resp. vectors) and small letters $x,y$ for specific realizations of these. If $X = (X_1,\ldots, X_n) \in \mathbb{R}^n$ we write $X^{n-1} := (X_1,\ldots,X_{n-1})$ to denote the first $n-1$ coordinates. When it is not clear from the context, we will write $h_n(X)$ and $N_n(X)$ for the entropy and entropy power respectively, to emphasize that the integral in the definition of entropy (see Section \ref{entropysec}) is with respect to the Lebesgue measure in $\mathbb{R}^n$.

\section{Entropy inequalities} \label{entropysec}

Recall that the differential entropy $h(X)$ and the entropy power $N(X)$ of a random vector $X$ in $\mathbb{R}^n$ with density $f$ are defined as  
\[
h(X)=-\int_{\mathbb{R}^n} f(x)\log f(x) dx\quad\hbox{and}\quad N(X)=e^{\frac{2}{n}h(X)}.
\]

Let \( X \) and \( Y \) be two independent random variables in \( \mathbb{R}^{d} \). The EPI states that \[ N(X+Y) \geq N(X) + N(Y). \]


In the proof of our main result we are going to need a conditional form of the EPI, which seems to be known (see e.g. \cite{conditionalquantum} where quantum versions are of interest). Nevertheless we include the proof, as we were not able to find it explicitly in the literature. 

Let $X,Z,Y$ be three random variables/vectors with densities. We say that $X\rightarrow Z \rightarrow Y$ form a Markov chain if,  $X$ and $Y$ are conditionally independent given $Z$, i.e. for a.e. $z,$ $f(x,y|z) = f(x|z)f(y|z),$ where $f(x,y,z)$ is the joint density of $X,Y,Z$ and for any $z,$ $f(x,y|z) = \frac{f(x,y,z)}{f(z)} = \frac{f(x,y,z)}{\int{f(u,v,z)dudv}}$ is the conditional density of $(X,Y)$ given $Z = z,$ defined for a.e. $z$, and analogously $f(x|z) =\frac{f(x,z)}{f(z)}=\frac{\int f(x,v,z)dv}{\int{f(u,v,z)dudv}}$.
\begin{lemma} \label{condEPI}
Suppose $X,Y \in \mathbb{R}^n$, and $Z$ are random  with values in some space $\Omega$,  such that $X\rightarrow Z \rightarrow Y$ form a Markov chain and given $Z,$ $X$ and $Y$ have conditional densities on $\mathbb{R}^n$. Then 
$$N(X+Y|Z) \geq N(X|Z) + N(Y|Z),$$
where for any random vectors $U,V$ such that $U$ has a conditional density given $V$ in $\mathbb{R}^n,$ $N(U|V) := e^{\frac{2}{n}h(U|V)}$.
Moreover, there is equality, if and only if for $Z$-almost every $z,$ the conditional densities given $Z=z$ of $X$ and $Y$ are Gaussian with proportional covariances.  
\end{lemma}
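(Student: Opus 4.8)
The plan is to reduce the conditional EPI to the unconditional one by conditioning on $Z=z$ and then integrating, using Jensen's inequality to handle the convexity. Recall that $h(X+Y\mid Z) = \int h(X+Y\mid Z=z)\, dP_Z(z)$, and similarly for $X$ and $Y$ alone. The crucial point is that, given $Z=z$, the Markov property $X\rightarrow Z\rightarrow Y$ makes $X$ and $Y$ conditionally independent, so the \emph{unconditional} EPI applies to their conditional laws: for $P_Z$-almost every $z$,
\[
N(X+Y\mid Z=z) \geq N(X\mid Z=z) + N(Y\mid Z=z),
\]
where $N(U\mid Z=z) := e^{\frac{2}{n}h(U\mid Z=z)}$. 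This is the first step, and it requires checking that the conditional density of $X+Y$ given $Z=z$ is the convolution of the conditional densities of $X$ and $Y$ given $Z=z$, which is exactly what conditional independence guarantees.

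The second step is to pass from this pointwise-in-$z$ inequality to the claimed inequality between conditional entropy powers. Writing $a(z) := h(X\mid Z=z)$ and $b(z) := h(Y\mid Z=z)$, the pointwise bound reads $N(X+Y\mid Z=z)\geq e^{\frac{2}{n}a(z)} + e^{\frac{2}{n}b(z)}$. I would integrate the $\frac{n}{2}\log$ of both sides against $P_Z$ and apply Jensen's inequality to the concave function $t\mapsto \frac{n}{2}\log\!\left(e^{\frac{2}{n}t}\right)$ composed appropriately; more precisely, the map $(u,v)\mapsto \frac{n}{2}\log\!\bigl(e^{\frac{2}{n}u}+e^{\frac{2}{n}v}\bigr)$ is concave and increasing in each argument, so by Jensen
\[
\int \tfrac{n}{2}\log\!\bigl(e^{\frac{2}{n}a(z)}+e^{\frac{2}{n}b(z)}\bigr)\,dP_Z(z) \leq \tfrac{n}{2}\log\!\Bigl(e^{\frac{2}{n}\int a\,dP_Z}+e^{\frac{2}{n}\int b\,dP_Z}\Bigr) = \tfrac{n}{2}\log\bigl(N(X\mid Z)+N(Y\mid Z)\bigr).
\]
Combining with the first step, $h(X+Y\mid Z)=\int h(X+Y\mid Z=z)\,dP_Z(z)\geq \int \frac{n}{2}\log N(X+Y\mid Z=z)\,dP_Z(z)$, and exponentiating yields the desired $N(X+Y\mid Z)\geq N(X\mid Z)+N(Y\mid Z)$.

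For the equality characterization, I would trace back the two inequalities used. Equality in the integrated EPI (first step) forces, for $P_Z$-almost every $z$, equality in the unconditional EPI applied to the conditional laws; by the known equality case of the EPI this means that given $Z=z$, both $X$ and $Y$ are Gaussian with proportional covariances. Separately, equality in the Jensen step (second step) forces the function $z\mapsto (a(z),b(z))$ to be $P_Z$-almost surely constant \emph{along the direction in which the concave function is strictly concave}; one must verify that the pointwise-EPI equality condition is compatible with, but does not by itself remove the need to check, this Jensen equality. The main obstacle I anticipate is precisely this second equality analysis: the function $(u,v)\mapsto \frac{n}{2}\log(e^{2u/n}+e^{2v/n})$ is concave but not strictly concave (it is linear along rays $u-v=\mathrm{const}$), so Jensen equality does not force $(a(z),b(z))$ to be constant but only that the pair stays on such a ray, and I would need to confirm that the stated characterization (Gaussian with proportional covariances for almost every $z$) is the correct and complete condition, checking carefully that no additional constraint coupling different values of $z$ is needed and that the degenerate directions of the Jensen inequality are automatically accounted for by the Gaussian equality structure.
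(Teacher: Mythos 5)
Your overall strategy (apply the unconditional EPI pointwise in $z$, then integrate) is viable, but the key step as written is broken. The map $(u,v)\mapsto \frac{n}{2}\log\bigl(e^{2u/n}+e^{2v/n}\bigr)$ is \emph{not} concave: it is (a rescaling of) the log-sum-exp function, which is convex. Consequently the Jensen inequality you display,
\[
\int \tfrac{n}{2}\log\bigl(e^{\frac{2}{n}a(z)}+e^{\frac{2}{n}b(z)}\bigr)\,dP_Z(z) \;\leq\; \tfrac{n}{2}\log\bigl(N(X\mid Z)+N(Y\mid Z)\bigr),
\]
points in the wrong direction for your chain: from $h(X+Y\mid Z)\geq A$ and $A\leq B$ you cannot conclude $h(X+Y\mid Z)\geq B$. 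The argument is rescued precisely by correcting the convexity claim: since log-sum-exp is convex, Jensen gives $\int \frac{n}{2}\log\bigl(e^{\frac{2}{n}a}+e^{\frac{2}{n}b}\bigr)\,dP_Z \geq \frac{n}{2}\log\bigl(e^{\frac{2}{n}\int a}+e^{\frac{2}{n}\int b}\bigr)$, which chains correctly with the pointwise EPI. So the route works, but only after reversing both the stated convexity and the stated inequality.

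It is worth contrasting this with the paper's proof, which avoids Jensen altogether via Lieb's normalization: one sets $\tilde X = X/\sqrt{N(X|Z)}$, $\tilde Y = Y/\sqrt{N(Y|Z)}$ and $\lambda = N(X|Z)/(N(X|Z)+N(Y|Z))$ (note these are global constants, not functions of $z$), applies the Lieb form $h(\sqrt{\lambda}\tilde X+\sqrt{1-\lambda}\tilde Y\mid Z=z)\geq \lambda h(\tilde X\mid Z=z)+(1-\lambda)h(\tilde Y\mid Z=z)$, and integrates a now \emph{linear} inequality in $z$ with no loss. This also makes the equality analysis cleaner. On that point your instinct is right and important: log-sum-exp is affine exactly along the lines $u-v=\mathrm{const}$, so equality in your (corrected) Jensen step forces $h(X\mid Z=z)-h(Y\mid Z=z)$ to be $P_Z$-a.s.\ constant. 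This is a genuine additional constraint beyond pointwise Gaussianity with proportional covariances: it forces the proportionality constant between the conditional covariances to be independent of $z$ (in dimension one, for instance, conditional variances that vary in ratio across $z$ give Gaussian conditionals with ``proportional'' covariances yet strict inequality). Your proposal flags this but does not resolve it; a complete write-up must carry out this second equality analysis, whereas in the paper's normalization the uniform proportionality constant $N(X|Z)/N(Y|Z)$ emerges automatically from the fact that $\lambda$ does not depend on $z$.
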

\begin{proof}

Let us consider 
\[
\tilde{X} = \frac{X}{\sqrt{N(X|Z)}} \quad \text{and} \quad \tilde{Y} = \frac{Y}{\sqrt{N(Y|Z)}}
\]
and observe that by scaling of entropy power
${N}(t X) = t^2 {N}(X),$ we have 
\begin{equation} \label{condH0}
N(\tilde{X}|Z) = N(\tilde{Y}|Z) = 1.
\end{equation}
Set 
\[
\lambda = \frac{N(X|Z)}{N(X|Z) + N(Y|Z)} 
\]
and for shorthand let $t:= N(X|Z) + N(Y|Z)$, i.e. $ \lambda = \frac{N(X|Z)}{t}$.
We have
\[
{N}(X + Y|Z) = {N} \left(\sqrt{t}( \sqrt{\lambda} \tilde{X} + \sqrt{1 - \lambda} \tilde{Y} ) |Z\right) = t {N}( \sqrt{\lambda} \tilde{X} + \sqrt{1 - \lambda} \tilde{Y}  |Z).
\]
But, since $\tilde{X}$ and $\tilde{Y}$ are conditionally independent given $Z$, it follows from the EPI, that for almost every $z,$
\begin{equation} \label{fixedzineq}
h(\sqrt{\lambda}{\tilde{X}} + \sqrt{(1-\lambda)}{\tilde{Y}}|Z=z) \geq \lambda h(\tilde{X}|Z=z) + (1-\lambda)h(\tilde{Y}|Z=z).
\end{equation}
Using the expression \eqref{condHaverage}, we get
\begin{equation} \label{condEPIlog}
h(\sqrt{\lambda}{\tilde{X}} + \sqrt{(1-\lambda)}{\tilde{Y}}|Z) \geq \lambda h(\tilde{X}|Z) + (1-\lambda)h(\tilde{Y}|Z),
\end{equation}
and using \eqref{condH0}, we have
${N}( \sqrt{\lambda} \tilde{X} + \sqrt{1 - \lambda} \tilde{Y}  |Z) \geq 1$. Therefore 
\[
{N}(X + Y|Z) \geq  t = {N}(X|Z)+ {N}(Y|Z).
\]

Moreover, if there is equality in the statement, then inequality \eqref{fixedzineq} should be equality for almost every $z$ (otherwise, if it were strict inequality in a set of positive measure, \eqref{condEPIlog} would be strict inequality).

From the equality case in this form of the EPI \cite{dembocoverthomas}, $\tilde{X}|_{Z=z}$ and $\tilde{Y}|_{Z=z}$ are Gaussian with the same covariance matrix. By the definitions of $\tilde{X}$ and $\tilde{Y}$, $X,Y$ are also Gaussian having as covariance matrix each a possibly different multiple of the common covariance of $\tilde{X}$ and $\tilde{Y}$.
\end{proof}

\begin{theorem} [Entropy analogue of Bergstr{\"o}m's inequality] \label{B2}
Let \(X = (X_{1}, \ldots, X_{n})\) and \(Y = (Y_{1}, \ldots, Y_{n})\) be two independent random vectors in \(\mathbb{R}^{n}\). Let \(X^{n-1} = (X_{1}, \ldots, X_{n-1})\) and \(Y^{n-1} = (Y_{1}, \ldots, Y_{n-1})\). Then, the following inequalities hold true and are equivalent: 
\begin{equation} \label{conditionalHBonnesen}
e^{2h(\sqrt{1-\lambda}X_n+\sqrt{\lambda}Y_n|\sqrt{1-\lambda}X^{n-1}+\sqrt{\lambda}Y^{n-1})} \geq (1-\lambda)e^{2h(X_n|X^{n-1})} + \lambda e^{2h(Y_n|Y^{n-1})}
\end{equation}
and
\begin{equation} \label{entropyPowerBonnesen}
\frac{N(X+Y)^n}{ N_{n-1}({X^{n-1}+Y^{n-1})^{n-1}}} \geq \frac{N(X)^n}{N_{n-1}(X^{n-1})^{n-1}} + \frac{N(Y)^n}{N_{n-1}(Y^{n-1})^{n-1}}.
\end{equation}

\end{theorem}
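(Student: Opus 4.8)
The plan is to reduce both displayed inequalities to a single $\lambda$-free conditional statement and then to deduce that statement from the conditional EPI of Lemma~\ref{condEPI}. Writing $N(X)=N_n(X)=e^{\frac{2}{n}h(X)}$, the definition of entropy power gives $N(X)^n=e^{2h(X)}$ and $N_{n-1}(X^{n-1})^{n-1}=e^{2h(X^{n-1})}$, so the chain rule $h(X)=h(X^{n-1})+h(X_n\mid X^{n-1})$ turns each ratio in \eqref{entropyPowerBonnesen} into a conditional exponential: the left-hand side equals $e^{2h(X_n+Y_n\mid X^{n-1}+Y^{n-1})}$ and the right-hand terms equal $e^{2h(X_n\mid X^{n-1})}$ and $e^{2h(Y_n\mid Y^{n-1})}$. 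Thus \eqref{entropyPowerBonnesen} is equivalent to
\begin{equation} \label{starred}
e^{2h(X_n+Y_n\mid X^{n-1}+Y^{n-1})} \geq e^{2h(X_n\mid X^{n-1})} + e^{2h(Y_n\mid Y^{n-1})}.
\end{equation}
Applying \eqref{starred} to the scaled vectors $\sqrt{1-\lambda}\,X$ and $\sqrt{\lambda}\,Y$, and using the one-dimensional scaling rule $h(tU\mid tV)=h(U\mid V)+\log|t|$ (valid because conditioning on $tV$ carries the same information as conditioning on $V$), reproduces exactly \eqref{conditionalHBonnesen}; running the computation backwards recovers \eqref{starred}, so the two displays are equivalent.

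It then remains to prove \eqref{starred}, and here I would condition on the full pair $W:=(X^{n-1},Y^{n-1})$ rather than on the sum. Since $X$ and $Y$ are independent, the conditional density of $(X_n,Y_n)$ given $W=(a,b)$ is proportional to $f_X(a,\cdot)\,f_Y(b,\cdot)$, which factorizes; hence $X_n$ and $Y_n$ are conditionally independent given $W$, i.e. $X_n\to W\to Y_n$ is a Markov chain. Lemma~\ref{condEPI}, in dimension one, then yields
\begin{equation} \label{applyEPI}
e^{2h(X_n+Y_n\mid W)} \geq e^{2h(X_n\mid W)} + e^{2h(Y_n\mid W)}.
\end{equation}
Independence of $X$ and $Y$ also gives $h(X_n\mid W)=h(X_n\mid X^{n-1})$ and $h(Y_n\mid W)=h(Y_n\mid Y^{n-1})$, since conditioning $X_n$ additionally on $Y^{n-1}$ changes nothing once $X^{n-1}$ is known. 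This identifies the right-hand side of \eqref{applyEPI} with that of \eqref{starred}.

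The one remaining point concerns the left-hand side: \eqref{applyEPI} conditions on $W$, whereas \eqref{starred} conditions only on the sum $Z:=X^{n-1}+Y^{n-1}$, which is a function of $W$. Because conditioning reduces entropy and $Z=g(W)$, we have $h(X_n+Y_n\mid W)=h(X_n+Y_n\mid W,Z)\leq h(X_n+Y_n\mid Z)$, hence $e^{2h(X_n+Y_n\mid Z)}\geq e^{2h(X_n+Y_n\mid W)}$; chaining this with \eqref{applyEPI} gives \eqref{starred}. I expect this last comparison to be the crucial and delicate step: the direction of ``conditioning reduces entropy'' is favourable precisely because we need a \emph{lower} bound on the term conditioned on the coarser information $Z$, so enlarging the conditioning from $Z$ to $W$ can only decrease the quantity we are bounding from below. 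The only technical care needed throughout is the existence of the relevant conditional densities and the passage from the almost-every-$z$ form of the EPI to the conditional entropies via the averaging identity \eqref{condHaverage}, exactly as in the proof of Lemma~\ref{condEPI}.
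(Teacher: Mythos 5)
Your proposal is correct and follows essentially the same route as the paper: reduce everything to the $\lambda$-free conditional inequality via scaling and the chain rule, then prove it by passing from conditioning on the sum $X^{n-1}+Y^{n-1}$ to conditioning on the pair $(X^{n-1},Y^{n-1})$ using ``conditioning reduces entropy,'' and applying the conditional EPI of Lemma~\ref{condEPI} with the conditional independence of $X_n$ and $Y_n$ given that pair. The paper runs the same argument directly on the $\lambda$-weighted form, but the steps and the key lemma are identical.
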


\begin{proof}
Inequality \eqref{conditionalHBonnesen} can be seen to be equivalent to 
$$
e^{2h(X_n+Y_n|X^{n-1}+Y^{n-1})} \geq e^{2h(X_n|X^{n-1})} +  e^{2h(Y_n|Y^{n-1})}
$$
by setting $\tilde{X} = \sqrt{1-\lambda}X$ (and analogously for $Y$) and scaling. 

The latter is equivalent to \eqref{entropyPowerBonnesen} by the definition of entropy power and since by the chain rule for differential entropy, 
$$
h\left(  X + Y \right) - h\left(  X^{n-1} +  Y^{n-1} \right) = h\left(X_n+Y_n \bigl| X^{n-1} +  Y^{n-1}
\right).
$$

We prove \eqref{conditionalHBonnesen}. Since conditioning reduces entropy,
\begin{align*}
&h\left(\sqrt{1-\lambda}X_n + \sqrt{\lambda} Y_n \mid \sqrt{1-\lambda}X^{n-1} + \sqrt{\lambda}Y^{n-1}\right) \\
&\geq h\left(\sqrt{1-\lambda}X_n + \sqrt{\lambda}Y_n \mid \sqrt{1-\lambda}X^{n-1} + \sqrt{\lambda}Y^{n-1},  X^{n-1}, Y^{n-1}\right) 
=h\left(\sqrt{1-\lambda}X_n + \sqrt{\lambda}Y_n \mid X^{n-1}, Y^{n-1}\right).
\end{align*}
Thus,
\begin{align} \nonumber
e^{2h(\sqrt{1-\lambda} X_n + \sqrt{\lambda} Y_n \mid \sqrt{1-\lambda}X^{n-1} + \sqrt{\lambda} Y^{n-1})} &\geq e^{2 h\left(\sqrt{1-\lambda}X_n + \sqrt{\lambda}Y_n \mid X^{n-1}, Y^{n-1}\right)} \\ \label{condEPIuse}
&\geq (1-\lambda) e^{2h(X_n \mid X^{n-1},Y^{n-1})}+\lambda e^{2h(Y_n\mid X^{n-1},Y^{n-1})} \\ \nonumber
&= (1-\lambda) e^{2h(X_n \mid X^{n-1})}+\lambda e^{2h(Y_n\mid Y^{n-1})},
\end{align}
where in \eqref{condEPIuse} we have used the conditional EPI, Lemma \ref{condEPI}, with $Z = (X^{n-1},Y^{n-1})$, after noting that by the independence of $X$ and $Y$, $X_n$ and $Y_n$ are conditionally independent given $(X^{n-1},Y^{n-1})$.

\end{proof}

A few remarks are in order: 
\begin{remark}
\begin{enumerate}

\item It can be seen from the proof of Theorem \ref{B2} that the following two equivalent forms also hold true:   
\begin{equation} 
e^{2h(X_n+Y_n|X^{n-1}+Y^{n-1})} \geq e^{2h(X_n|X^{n-1})} +  e^{2h(Y_n|Y^{n-1})}
\end{equation}
and 
\begin{equation} \label{lambdaform}
\frac{N\left(\sqrt{\lambda}X+\sqrt{1-\lambda}Y\right)^n}{ N_{n-1}\left(\sqrt{\lambda} X^{n-1}  + \sqrt{1-\lambda}Y^{n-1}\right)^{n-1}
}
\geq \lambda\frac{N(X)^n}{N_{n-1}(X^{n-1})^{n-1}} + (1-\lambda)\frac{N(Y)^n}{N_{n-1}(Y^{n-1})^{n-1}}.
\end{equation}

    \item Inequality \eqref{entropyPowerBonnesen} is stronger than the EPI, $N(X+Y) \geq N(X) + N(Y)$, for $n >1$, in the sense that the latter can be deduced from the former via the following inductive argument: 

Assume \eqref{entropyPowerBonnesen} and the EPI for $n=1$. By the EPI for $n-1$, $N_{n-1}(X^{n-1} + Y^{n-1}) \geq N_{n-1}(X^{n-1}) + N_{n-1}(Y^{n-1}),$ and therefore \eqref{entropyPowerBonnesen} implies 
$$
N(X+Y)^n \geq \Bigl(\frac{N_{n-1}(X^{n-1})+N_{n-1}(Y^{n-1})}{N_{n-1}(X^{n-1})}\Bigr)^{n-1} N(X)^n + \Bigl(\frac{N_{n-1}(X^{n-1})+N_{n-1}(Y^{n-1})}{N_{n-1}(Y^{n-1})}\Bigr)^{n-1} N(Y)^n.
$$
    Letting $\mu = \frac{N_{n-1}(X^{n-1})}{N_{n-1}(X^{n-1})+N_{n-1}(Y^{n-1})}$,
    this reads  
    $$
    N(X+Y)^n \geq \mu \frac{N(X)^n}{\mu^n} + (1-\mu)\frac{N(Y)^n}{(1-\mu)^n} \geq (N(X) + N(Y))^n,
    $$
by convexity of $x \mapsto x^n, x\geq 0$, the EPI follows. 

\item The steps in the proof of Theorem \ref{B2} are generalizations of the proof of \cite{dembocoverthomas}, which restricted to Gaussian random variables. Taking $X,Y$ Gaussian in our result, we recover Bergstr{\"o}m's inequality for determinants.

\item Although we state Theorem \ref{B2} for simplicity using $N_{n-1}(X^{n-1})$ and $N_{n-1}(Y^{n-1})$, it can be generalized, via a change of axes, in the sense that the same conclusion holds with $N_{n-1}(AX)$ and $N_{n-1}(AY)$ instead, for any projection $A: \RL^n \to \RL^{n-1}$.

\item The same proof also works to obtain the following entropic analogue of Ky-Fan's Theorem (Theorem \ref{fanlabel}), which generalizes \eqref{conditionalHBonnesen}: for $X=(X_1,\dots,X_n)$ and $I \subset \{1,\ldots,n\}$, write $X_I = \{X_i\}_{i \in I}$. For $I$ with $|I| = k$, we have
\begin{equation} 
e^{\frac{2}{k}h(\sqrt{1-\lambda}X_I+\sqrt{\lambda}Y_I|\sqrt{1-\lambda}X_{I^c}+\sqrt{\lambda}Y_{I^c})} \geq (1-\lambda)e^{\frac{2}{k}h(X_I|X_{I^c})} + \lambda e^{\frac{2}{k}h(Y_I|Y_{I^c})}.
\end{equation}
\end{enumerate}

\end{remark}

Theorem \ref{B2} implies an entropic version of Bonnesen's inequality \ref{bonnessenintro}, Corollary \ref{bonnesen} below. In view of the concavity of $x \mapsto x^{\frac{1}{n}}, x \geq 0$, it shows that, under the assumption that some $n-1$--dimensional subvectors of $X$ and $Y$ have the same entropy, the EPI can be improved. Again, although we state it for simplicity under the assumption \(h(X^{n-1}) = h(Y^{n-1})\), the latter can be relaxed via a change of axes to \(h(AX) = h(AY)\) for some projection $A: \RL^n \to \RL^{n-1}$.

\begin{corollary} [Entropy analogue of Bonnesen's inequality] \label{bonnesen}
Let \(X = (X_{1}, \ldots, X_{n})\) and \(Y = (Y_{1}, \ldots, Y_{n})\) be two independent random vectors in \(\mathbb{R}^{n}\). Let \(X^{n-1} = (X_{1}, \ldots, X_{n-1})\) and \(Y^{n-1} = (Y_{1}, \ldots, Y_{n-1})\) and suppose that \(h(X^{n-1}) = h(Y^{n-1})\). Then, 

\begin{equation}
\label{result1}
e^{2h(\sqrt{1-\lambda}X + \sqrt{\lambda}Y)} \geq (1-\lambda)e^{2h(X)} + \lambda e^{2h(Y)},\hspace{0.1cm}
\text{for all}\hspace{0.1cm} \lambda\in [0,1].
\end{equation}
Moreover, there is equality if and only if $X$ and $Y$ are Gaussians having the same covariance matrix except the last element of the diagonal. 
\end{corollary}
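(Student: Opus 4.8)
The plan is to deduce Corollary \ref{bonnesen} from the equivalent form \eqref{lambdaform} of Theorem \ref{B2} together with an equality-case analysis of the ingredients in its proof. For the inequality itself, observe that the hypothesis $h(X^{n-1}) = h(Y^{n-1})$ is exactly $N_{n-1}(X^{n-1}) = N_{n-1}(Y^{n-1}) =: M$. Applying the EPI in dimension $n-1$ to the independent vectors $\sqrt{1-\lambda}X^{n-1}$ and $\sqrt{\lambda}Y^{n-1}$ and using the scaling $N_{n-1}(tU) = t^2 N_{n-1}(U)$ gives $N_{n-1}(\sqrt{1-\lambda}X^{n-1} + \sqrt{\lambda}Y^{n-1}) \geq (1-\lambda)M + \lambda M = M$, so the denominator on the left of \eqref{lambdaform} (taken in the $\lambda \leftrightarrow 1-\lambda$ form matching the corollary) is at least $M^{n-1}$. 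Substituting this bound and $N_{n-1}(X^{n-1}) = N_{n-1}(Y^{n-1}) = M$ into \eqref{lambdaform} and recalling $N(\cdot)^n = e^{2h(\cdot)}$ yields \eqref{result1} at once. (For $n=1$ the statement degenerates to the scaled one-dimensional EPI.)

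For the equality case I would fix $\lambda \in (0,1)$, the endpoints being trivial, and note that equality in \eqref{result1} forces equality in both of the steps above, since all factors are strictly positive. Equality in the EPI step in dimension $n-1$ forces, by the equality case of the EPI, that $X^{n-1}$ and $Y^{n-1}$ are Gaussian with proportional covariances; combined with $h(X^{n-1}) = h(Y^{n-1})$ the proportionality constant is $1$, so they share a common positive definite covariance $C$. Equality in \eqref{lambdaform}, i.e. in \eqref{conditionalHBonnesen}, in turn forces equality in the two inequalities used to prove it: the conditional EPI (Lemma \ref{condEPI}) with $Z = (X^{n-1}, Y^{n-1})$, and the conditioning-reduces-entropy step. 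The conditional-EPI equality case gives that, for almost every $(x,y)$, the laws of $X_n \mid X^{n-1} = x$ and $Y_n \mid Y^{n-1} = y$ (using independence of $X,Y$ to drop the irrelevant coordinate) are one-dimensional Gaussians, say $N(m_1(x), \sigma_1^2(x))$ and $N(m_2(y), \sigma_2^2(y))$.

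The crux is to upgrade these conditional Gaussians to joint Gaussianity with the correct covariance structure, and this is where the equality in the conditioning-reduces-entropy step enters. That equality is equivalent to $W := \sqrt{1-\lambda}X_n + \sqrt{\lambda}Y_n$ being conditionally independent of $(X^{n-1}, Y^{n-1})$ given $S := \sqrt{1-\lambda}X^{n-1} + \sqrt{\lambda}Y^{n-1}$. Since the conditional law of $W$ given $(X^{n-1},Y^{n-1}) = (x,y)$ is $N(\sqrt{1-\lambda}m_1(x) + \sqrt{\lambda}m_2(y),\, (1-\lambda)\sigma_1^2(x) + \lambda\sigma_2^2(y))$, this conditional independence says that both the mean and the variance depend on $(x,y)$ only through $\sqrt{1-\lambda}x + \sqrt{\lambda}y$. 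Each of these is a Pexider-type functional equation $G(u) + H(v) = \phi(u+v)$, whose measurable solutions are affine with a common linear part; hence $\sigma_1^2(\cdot), m_1(\cdot)$ (and likewise for $Y$) are affine on $\mathbb{R}^{n-1}$. A nonnegative affine function on all of $\mathbb{R}^{n-1}$ must be constant, so the conditional variances are constant; an affine conditional mean with constant conditional variance and a Gaussian marginal $X^{n-1} \sim N(\cdot, C)$ makes $X = (X^{n-1}, X_n)$ jointly Gaussian, and similarly for $Y$. Finally, matching the linear parts of the two means in the Pexider equation forces the slope vectors of $m_1$ and $m_2$ to coincide, which is precisely the statement that the off-diagonal blocks of $\Sigma_X$ and $\Sigma_Y$ (both equal to $C$ times the common slope) agree; since $C$ is invertible the full covariances then agree in every entry except possibly $(n,n)$. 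The converse is a direct computation: writing the covariances in block form and using $\det\left(\begin{smallmatrix} C & b \\ b^\top & d\end{smallmatrix}\right) = \det(C)\,(d - b^\top C^{-1} b)$ shows both sides of \eqref{result1} equal $(2\pi e)^n \det(C)\big((1-\lambda)d_X + \lambda d_Y - b^\top C^{-1}b\big)$.

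The main obstacle I anticipate is this reverse direction: carefully bookkeeping the two independent sources of equality — the lower-dimensional EPI and the two-step proof of \eqref{conditionalHBonnesen} — and, above all, solving the functional equations with only measurability available, so that the conditional mean is genuinely affine and the conditional variance genuinely constant. It is precisely the nonnegativity of a variance that rules out a nonzero linear part and thereby delivers joint Gaussianity rather than mere conditional Gaussianity.
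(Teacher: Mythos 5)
Your proof of the inequality is exactly the paper's: the hypothesis gives $N_{n-1}(X^{n-1})=N_{n-1}(Y^{n-1})$, the $(n-1)$-dimensional EPI bounds the denominator of \eqref{lambdaform} from below by this common value, and the claim follows by simplification. For the equality case you follow the same skeleton (equality must hold in the lower-dimensional EPI and in both steps of the proof of Theorem \ref{B2}), but you execute the decisive part differently. The paper passes from ``the conditionals $X_n\mid X^{n-1}$ and $Y_n\mid Y^{n-1}$ are Gaussian'' directly to ``$X$ and $Y$ are jointly Gaussian'', and then extracts the agreement of the off-diagonal entries from the determinant identity $\det((1-\lambda)\Sigma_1+\lambda\Sigma_2)=(1-\lambda)\det\Sigma_1+\lambda\det\Sigma_2$ via a Schur-complement computation reducing to $(1-\lambda)\sum a_i^2+\lambda\sum b_i^2=\sum((1-\lambda)a_i+\lambda b_i)^2$. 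You instead exploit the equality condition in the conditioning-reduces-entropy step --- conditional independence of $\sqrt{1-\lambda}X_n+\sqrt{\lambda}Y_n$ from $(X^{n-1},Y^{n-1})$ given $\sqrt{1-\lambda}X^{n-1}+\sqrt{\lambda}Y^{n-1}$ --- to set up Pexider equations for the conditional means and variances, whose measurable solutions are affine; nonnegativity then makes the conditional variance constant and the common linear part forces the off-diagonal blocks to coincide. Your route is longer but buys something real: conditional Gaussianity with $x$-dependent parameters does not by itself give joint Gaussianity, so the Pexider step supplies a justification that the paper's proof leaves implicit, and it yields the off-diagonal agreement structurally rather than from the determinant identity. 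Your converse (both sides of \eqref{result1} equal $(2\pi e)^n\det(C)\bigl((1-\lambda)d_X+\lambda d_Y-b^{\top}C^{-1}b\bigr)$) is the same computation the paper runs in reverse. The only points to tidy are routine: the functional equations hold only almost everywhere (the a.e.\ Pexider theory still gives affine representatives), and the endpoint cases $\lambda\in\{0,1\}$ should be excluded from the equality discussion as you note.
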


\begin{proof}
By the EPI 
\begin{equation} \label{useofEPI}
N_{n-1}\left(\sqrt{\lambda} X^{n-1}  + \sqrt{(1-\lambda)}Y^{n-1}\right) \geq \lambda N_{n-1}(X^{n-1}) + (1-\lambda)N_{n-1}(Y^{n-1})=N_{n-1}(X^{n-1}).
\end{equation}
Combining with \eqref{lambdaform} the claim follows by simplifying since $N_{n-1}(X^{n-1}) = N_{n-1}(Y^{n-1})$.

For the equality case, note that if there is equality in \eqref{result1}, all inequalities used in its proof have to be equalities. But equality in \eqref{useofEPI} implies that $X^{n-1}$ and $Y^{n-1}$ are Gaussians with proportional covariances. Since, by assumption, $h(X^{n-1}) = h(Y^{n-1})$, the covariances of $X^{n-1}$ and $Y^{n-1}$ have in addition the same determinant. Therefore, they have to be equal to each other. 

Moreover, we must have equality in \eqref{condEPIuse}. By the equality case in Lemma \ref{condEPI} the conditional distributions $X_n$ and $Y_n$ given $X^{n-1}$ and $Y^{n-1}$ respectively, are Gaussian. Therefore, $X$ and $Y$ are both Gaussian. 

Now we argue that the elements of the last rows (respectively columns) of the covariances are the same. 
So, suppose \( X \) and \( Y \) are Gaussian random vectors with respective covariance matrices \( \Sigma_1^n \) and \( \Sigma_2^n \). The equality case in \eqref{result1} translates to
\begin{equation}\label{equaldeterminants}
\left| (1 - \lambda) \Sigma_{1}^{n} + \lambda \Sigma_{2}^{n} \right| = (1 - \lambda) \left| \Sigma_{1}^{n} \right| + \lambda \left| \Sigma_{2}^{n} \right|.
\end{equation}
As we have already argued, 
$\left| \Sigma_1^{n-1} \right| = \left| \Sigma_2^{n-1} \right|,$
where $\Sigma_1^{n-1}$ is the sub-matrix obtained after removing the last row and column of $\Sigma_1^n$. Without loss of generality, we can assume that
$
\Sigma_1^{n-1} = \Sigma_2^{n-1} = I_{n-1}
$, where $I_{n-1}$ is the $(n-1)$ identity matrix.
Thus, \( \Sigma_{1}^{n} = \begin{pmatrix} I_{n-1} & v \\ v^T & a_n \end{pmatrix} \)  where \( v = \begin{pmatrix} a_1 \\ a_2 \\ \vdots \\ a_{n-1} \end{pmatrix} \) is a column vector of size \( (n-1) \), and the matrices $\Sigma_{2}^{n}, (1-\lambda)\Sigma_{1}^{n}+\lambda\Sigma_{2}^{n}$ have  analogous forms. For the determinant we have  
\[
\left| \Sigma_1^n \right| = \left|I_{n-1} \right| \cdot \left| a_n - v^T I_{n-1}^{-1} v\right| = a_n - \sum_{i=1}^{n-1} a_i^2
\]
and analogously for $\Sigma_2^n$ and $(1-\lambda)\Sigma_{1}^{n}+\lambda\Sigma_{2}^{n}$. Therefore, writing $b_i$ for the first $n-1$ elements of the last row (respectively column) of $\Sigma_2^n$, \eqref{equaldeterminants} corresponds to 

\[
(1 - \lambda) \sum_{i=1}^{n-1} a_i^2 + \lambda \sum_{i=1}^{n-1} b_i^2 = \sum_{i=1}^{n-1} \left( (1 - \lambda) a_i + \lambda b_i \right)^2.
\]
Thus, we conclude that equality occurs only if \( a_i = b_i \) for all \( i = 1, \dots, n-1 \).

\end{proof}

\begin{remark}
Notice that by defining the map:
\begin{equation} \label{Qmap}
f: \lambda \mapsto \frac{N\left(\sqrt{\lambda}X+\sqrt{1-\lambda}Y\right)^{n}}{\left(N_{n-1}(\sqrt{\lambda}X^{n-1} +\sqrt{(1-\lambda)}Y^{n-1})\right)^{n-1} }, 
\end{equation}

Theorem \ref{B2} states that
\[
f(\lambda) = f(\lambda \cdot 1 + (1-\lambda) \cdot 0) \geq \lambda f(1) + (1-\lambda) f(0).\]

Ball, Nayar and Tkocz \cite{bnt} conjectured that the map $\lambda \mapsto h(\sqrt{1-\lambda}X+\sqrt{\lambda}Y)$ is concave for i.i.d. log-concave $X,Y$ in dimension 1. 
A partial result in a different direction was given in \cite{eskegavalakis}, where it shown that this entropy-map is concave for Gaussian (scale) mixtures. 

In view of the above conjecture and the partial answer for Gaussian mixtures it is natural to ask for which class of random variables (if any) is the map defined in \eqref{Qmap} concave. 

Note that, in contrast to these works, when restricting to $d=1$, we are asking for concavity of the entropy power, rather than the entropy itself, which is stronger, since concavity implies log-concavity. 

When one of the two random vectors is Gaussian, by homogeneity, concavity of the map defined above is equivalent to concavity of 
$$
f: \lambda \mapsto \frac{N\left(X+\sqrt{\lambda}Z\right)^{n}}{\left(N_{n-1}(X^{n-1}+\sqrt{\lambda}Z^{n-1})\right)^{n-1} }, 
$$
analogously to \cite[Proposition 2.2]{fradelizimarsiglietti}.
The latter is known to hold true in dimension $1$ \cite{costa}.

\end{remark}
 
The isoperimetric inequality for entropies \cite[Theorem 16]{dembocoverthomas} asserts that 
\begin{equation} \label{usualIso}
    I(X)N(X) \geq 2 \pi e n
\end{equation} 
and is a consequence of the EPI. By our strengthened EPI, we obtain a sharpening of the isoperimetric inequality, Corollary \ref{isoperimetricCor} below. 

To see that \eqref{isoperimetric} is indeed stronger than the usual isoperimetric inequality for entropies, let $a = \frac{N(X^{n-1})}{N(X)}$ in \eqref{isoperimetric} and note that the arithmetic-geometric mean inequality implies $\frac{1}{n}a^{n-1} + \frac{n-1}{n}\frac{1}{a} \geq 1$. The isoperimetric inequality \eqref{usualIso} then follows.

\begin{corollary} \label{isoperimetricCor}
Let \( X \) be a random vector in \( \mathbb{R}^n \). Then, the following inequality is satisfied:

\begin{equation} \label{isoperimetric}
I(X) N(X) \geq 2\pi e \left( \left( \frac{N(X^{n-1})}{N(X)} \right)^{n-1} + (n-1) \frac{N(X)}{N_{n-1}(X^{n-1})} \right).
\end{equation}

\end{corollary}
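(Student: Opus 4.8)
The plan is to run the classical ``de Bruijn identity along the heat flow'' derivation of the entropic isoperimetric inequality \eqref{usualIso}, but feeding in the strengthened inequality \eqref{entropyPowerBonnesen} in place of the plain EPI. Throughout I use that $N(X^{n-1})$ means the $(n-1)$-dimensional entropy power $N_{n-1}(X^{n-1})$, and I abbreviate $M := N_{n-1}(X^{n-1})$. If $I(X)=+\infty$ the claim is trivial, so I may assume $I(X)<\infty$.

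First I would let $Z=(Z_1,\dots,Z_n)$ be a standard Gaussian in $\RL^n$ independent of $X$, and set \[ F(t)=\frac{N(X+\sqrt{t}Z)^n}{N_{n-1}(X^{n-1}+\sqrt{t}Z^{n-1})^{n-1}},\qquad t\ge 0. \] Applying \eqref{entropyPowerBonnesen} with $Y=\sqrt{t}Z$, and using the scaling $N(\sqrt t Z)=2\pi e\,t$ and $N_{n-1}(\sqrt t Z^{n-1})=2\pi e\,t$ (so that the $Y$-term collapses to $(2\pi e t)^n/(2\pi e t)^{n-1}=2\pi e\,t$), I obtain \[ F(t)\ \ge\ F(0)+2\pi e\,t\qquad\text{for all }t\ge 0. \] Since equality holds at $t=0$, this forces the one-sided derivative to satisfy $F'(0^+)\ge 2\pi e$.

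Next I would evaluate $F'(0^+)$ via the de Bruijn identity. Writing $\log F(t)=2h_n(X+\sqrt t Z)-2h_{n-1}(X^{n-1}+\sqrt t Z^{n-1})$ and using $\frac{d}{dt}h(W+\sqrt t Z)=\tfrac12 I(W+\sqrt t Z)$ in dimensions $n$ and $n-1$ (note $Z^{n-1}$ is a standard Gaussian in $\RL^{n-1}$ independent of $X^{n-1}$), I find \[ F'(0^+)=F(0)\left(I(X)-I(X^{n-1})\right)=\frac{N(X)^n}{M^{n-1}}\left(I(X)-I(X^{n-1})\right). \] Combining with $F'(0^+)\ge 2\pi e$ gives $I(X)-I(X^{n-1})\ge 2\pi e\,M^{n-1}/N(X)^n$. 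I would then bound the marginal Fisher information from below by the ordinary entropic isoperimetric inequality \eqref{usualIso} in dimension $n-1$, namely $I(X^{n-1})\ge 2\pi e(n-1)/M$, add the two inequalities, and multiply through by $N(X)$; recalling $M=N(X^{n-1})$ this produces exactly the right-hand side of \eqref{isoperimetric}.

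The main obstacle is the analytic justification of the de Bruijn identity and of passing to the limit $t\to 0^+$: one must ensure that the entropies along the heat semigroup are differentiable and that $\lim_{t\to 0^+}I(X+\sqrt t Z)=I(X)$ and $\lim_{t\to 0^+}I(X^{n-1}+\sqrt t Z^{n-1})=I(X^{n-1})$. These are standard consequences of the regularizing action of the heat flow, and since the inequality is trivial when either Fisher information is infinite, the whole argument reduces to the finite-Fisher-information regime, where these limits hold and all quantities are well defined.
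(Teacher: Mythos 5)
Your proof is correct and arrives at exactly \eqref{isoperimetric}, but it handles the key step differently from the paper. Both arguments perturb $X$ by $\sqrt{t}Z$, invoke Theorem \ref{B2}, and differentiate at $t=0$ via de Bruijn. The paper, however, never differentiates the $(n-1)$-dimensional denominator: it first lower-bounds $N_{n-1}(X^{n-1}+\sqrt{t}Z^{n-1})$ by $N_{n-1}(X^{n-1})+2\pi e t$ using the EPI, clears the denominator, Taylor-expands the resulting polynomial in $t$ from below by convexity, and then applies de Bruijn only to $N(X+\sqrt{t}Z)^n$. You instead keep the exact ratio $F(t)$, differentiate $\log F$ using de Bruijn in both dimensions $n$ and $n-1$, obtain the intermediate inequality $I(X)-I(X^{n-1})\geq 2\pi e\, N_{n-1}(X^{n-1})^{n-1}/N(X)^n$, and then must import the classical isoperimetric inequality \eqref{usualIso} in dimension $n-1$ to control $I(X^{n-1})$ from below. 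Your route buys a clean intermediate statement about the Fisher information deficit $I(X)-I(X^{n-1})$, which is of independent interest, at the cost of an extra ingredient (the $(n-1)$-dimensional isoperimetric inequality) and a second application of de Bruijn, hence slightly more regularity to justify; the paper's route is more self-contained in that it only needs the EPI and one de Bruijn differentiation. Two minor points on your reduction to the finite-Fisher-information regime: the case you should really worry about is $I(X)<\infty$ but $I(X^{n-1})=\infty$ (where your intermediate inequality would be vacuous), but this cannot occur since a Cauchy--Schwarz argument gives $I(X^{n-1})\leq I(X)$; and for the limit $t\to 0^+$ it suffices to combine the monotonicity of $t\mapsto I(W+\sqrt{t}Z)$ (from Blachman--Stam) with lower semicontinuity of Fisher information, so the analytic gap you flag is indeed benign.
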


\begin{proof}

Let \( X \) be a random vector in \( \mathbb{R}^n \) and \( Z \) be a standard Gaussian vector independent of \( X \). By Theorem~\ref{B2},

\[
\frac{N(X + \sqrt{t} Z)^n}{ (N_{n-1}(X^{n-1}) + t N_{n-1}(Z^{n-1}))^{n-1}} \geq \frac{N(X)^n}{N_{n-1}(X^{n-1})^{n-1}} + t \frac{N(Z)^n}{N_{n-1}(Z^{n-1})^{n-1}}.
\]
Noting that, if \( Z \) is an \( n \)-dimensional standard normal, \( N(Z) = 2\pi e \), by a first-order Taylor expansion of the convex function $t \mapsto  (N_{n-1}(X^{n-1}) + t N_{n-1}(Z^{n-1}))^{n-1}, t \geq 0$, we obtain
\[
N(X + \sqrt{t} Z)^n \geq \left(N_{n-1}(X^{n-1})^{n-1} + 2\pi e t (n-1) N_{n-1}(X^{n-1})^{n-2})\right) \left(\frac{N(X)^n}{N_{n-1}(X^{n-1})^{n-1}} + 2\pi e t\right).
\]
Therefore,
\begin{equation} \label{taylor}
N(X + \sqrt{t} Z)^n \geq N(X)^n + 2\pi e t \left(N_{n-1}(X^{n-1})^{n-1}+ (n-1) \frac{N(X)^n}{N_{n-1}(X^{n-1})}\right) + o(t^2).
\end{equation}
\noindent
This implies, as \( t \to 0 \), 

\[
\lim_{t\to 0} \frac{N(X+\sqrt{t}Z)^{n}- N(X)^{n}}{t} \geq 2\pi e  \left(N_{n-1}(X^{n-1})^{n-1}+ (n-1) \frac{N(X)^n}{N_{n-1}(X^{n-1})}\right).
\]

We recall de Bruijn's identity \cite[Theorem 14]{dembocoverthomas}, which states that if $Z$ is a standard normal, then
\begin{equation} \label{debruijn}
 \frac{d}{dt} h(X + \sqrt{t} Z) = \frac{1}{2} I(X + \sqrt{t} Z).
\end{equation}

Therefore the term on the left-hand side of \eqref{taylor} corresponds to \( \frac{d}{dt} N(X + \sqrt{t} Z)^n  \left. \right|_{t=0} \), which by \eqref{debruijn} equals \( I(X) N(X)^n  \) and the claimed inequality follows.

\end{proof}

\section{Fisher information inequalities} \label{fisherinfosec}

Let \( X \) be a random vector in \( \mathbb{R}^n \) with probability density function \( f \). Then its Fisher information is defined as 
\[
I(X) = \int_{\mathbb{R}^n} \frac{\left| \nabla f(x) \right|^{2}}{f(x)} \, dx.
\]
Slightly abusing the notation, we also write $I(f) = I(X)$ when $X$ has density $f$.

The following inequality, sometimes referred to as Blachman-Stam or Fisher information inequality, asserts that for any pair of independent random vectors $ X$ and $ Y$ in $ \mathbb{R}^n $, 

$$ I(X+Y)^{-1}\geq I(X)^{-1}+I(Y)^{-1} .$$

In what follows, if $X$ has density $f$ on $\mathbb{R}^n$, we write 
\begin{equation}
    \IPn(X) =  \int_{\mathbb{R}^n} \frac{\langle \nabla f, e_n \rangle^2}{f(x)} \, dx,
\end{equation}
and we call $\IPn$ the {\em projective Fisher information} of $X$.

\begin{definition} [{\em Conditional Fisher information}]
For two random vectors $X \in \mathbb{R}^n,Y \in \mathbb{R}^m$, such that the conditional density of $X$ given $Y$, say $f_{X|Y}(\cdot|\cdot)$, exists a.s. and is absolutely continuous on $\mathbb{R}^k$ for some $1 \leq k \leq n$, we define the conditional Fisher Information of $X$ given $Y$ as the expected Fisher information of the conditional density, that is 
$$
I(X|Y) := \int_{\mathbb{R}^m}{ f_Y(y) I\bigl(f_{X|Y}(\cdot | y)\bigr) dy}.
$$
\end{definition}

Two properties of the projective and conditional Fisher information of a random vector $X = (X_1,\ldots,X_n) \in \mathbb{R}^n$  are straightforward from the definitions: 
\begin{enumerate}
    \item 
    \begin{equation} \label{projeqcond} 
        \IPn (X) = I(X_n|X^{n-1}).
    \end{equation}
    This is because 
    \begin{align*}
       \IPn (X) &=  \int_{\mathbb{R}^n}\frac{\langle \nabla f, e_n \rangle ^2}{f(x)}  dx \\
       &= \int_{\RL^n}{\frac{f(x_1,\ldots,x_{n-1})\bigl(\frac{d}{dx_n}f(x_n|x_1,\ldots,x_{n-1})\bigr)^2}{f(x_n|x_1,\ldots,x_{n-1})}}dx \\
       &= \int_{\RL^{n-1}}f(x_1,\ldots,x_{n-1})\int_{\RL}{\frac{\bigl(\frac{d}{dx_n}f(x_n|x_1,\ldots,x_{n-1})\bigr)^2}{f(x_n|x_1,\ldots,x_{n-1})}}dx_ndx^{n-1}.
    \end{align*}
    
    \item By \eqref{projeqcond} and since for any vector $u, \inn{u}{e_n}^2 \leq \|u\|_2^2$, we have, 
    \begin{equation}
        I(X_n|X^{n-1}) = \IPn(X) \leq I(X).
    \end{equation}
    
\end{enumerate}

Applying the Blachman-Stam inequality to a particular operator, we obtain the following conditional Fisher information inequality: 


\begin{theorem}
Let $X,Y$ be two random vectors in $\RL^{n}$ with finite Fisher informations. Then 
\begin{equation} \label{projInequality}
\IPn(X+Y)^{-1} \geq \IPn(X)^{-1} + \IPn(Y)^{-1},
\end{equation}
or, equivalently, 
\begin{equation}
    I(X_n+Y_n|X^{n-1}+ Y^{n-1})^{-1} \geq I(X_n|X^{n-1})^{-1} + I(Y_n|Y^{n-1})^{-1},
\end{equation}
provided that the conditional Fisher informations exist. 
\end{theorem}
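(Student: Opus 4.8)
The plan is to rerun the classical score-function proof of the Blachman--Stam inequality while retaining only the component of the score in the direction $e_n$; this directional projection is the ``particular operator'' in question. Write $Z=X+Y$, let $f_X,f_Y,f_Z=f_X*f_Y$ be the densities, and let $\rho_W=\nabla f_W/f_W$ denote the (vector) score of a random vector $W$. The starting point is the observation that
\[
\IPn(W)=\int_{\RL^n}\frac{\langle \nabla f_W,e_n\rangle^2}{f_W}\,dx=\mathbb{E}\bigl[\langle \rho_W(W),e_n\rangle^2\bigr],
\]
so that the entire statement concerns only the last coordinate $\langle \rho_W,e_n\rangle$ of the score.

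First I would recall the projection identity for convolution scores: differentiating $f_Z(z)=\int f_X(z-y)f_Y(y)\,dy$ under the integral sign gives $\rho_Z(z)=\mathbb{E}[\rho_Y(Y)\mid Z=z]$ and, by the symmetric computation, $\rho_Z(z)=\mathbb{E}[\rho_X(X)\mid Z=z]$. Hence for every pair of reals with $\alpha+\beta=1$,
\[
\rho_Z(Z)=\mathbb{E}\bigl[\alpha\rho_X(X)+\beta\rho_Y(Y)\mid Z\bigr]
\]
as an identity of random vectors. Taking the inner product with $e_n$ and using that conditional expectation is an $L^2$-contraction (Jensen's inequality applied to $t\mapsto t^2$), I obtain
\[
\IPn(Z)=\mathbb{E}\bigl[\langle\rho_Z(Z),e_n\rangle^2\bigr]\le \mathbb{E}\Bigl[\bigl(\alpha\langle\rho_X(X),e_n\rangle+\beta\langle\rho_Y(Y),e_n\rangle\bigr)^2\Bigr].
\]
Expanding the square and using the independence of $X$ and $Y$ together with the zero-mean property $\mathbb{E}[\langle\rho_X(X),e_n\rangle]=\int_{\RL^n}\partial_nf_X\,dx=0$ (which kills the cross term), the right-hand side collapses to $\alpha^2\IPn(X)+\beta^2\IPn(Y)$. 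Minimising this quadratic over $\alpha+\beta=1$, attained at $\alpha=\IPn(Y)/(\IPn(X)+\IPn(Y))$, yields
\[
\IPn(Z)\le \frac{\IPn(X)\,\IPn(Y)}{\IPn(X)+\IPn(Y)},
\]
which is exactly \eqref{projInequality}.

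The equivalence with the conditional formulation is then immediate: by the already established identity \eqref{projeqcond} one has $\IPn(X)=I(X_n\mid X^{n-1})$ (and likewise for $Y$ and $Z$), while the elementary relations $(X+Y)_n=X_n+Y_n$ and $(X+Y)^{n-1}=X^{n-1}+Y^{n-1}$ identify $\IPn(X+Y)$ with $I(X_n+Y_n\mid X^{n-1}+Y^{n-1})$.

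I expect the only genuine obstacle to be the analytic justification of the score identities under the stated hypothesis of merely finite Fisher information: differentiation under the convolution integral and the vanishing of $\int_{\RL^n}\nabla f_X$ both require sufficient regularity and decay of the densities. This is handled exactly as in the classical Blachman--Stam proof, by a mollification argument (convolving with a vanishingly small Gaussian, for which all scores are smooth and rapidly decaying, establishing the inequality there, and then passing to the limit via lower semicontinuity of Fisher information); finiteness of $\IPn$ throughout is guaranteed by the bound $\IPn\le I$ noted before the statement. Once the quadratic bound $\IPn(Z)\le\alpha^2\IPn(X)+\beta^2\IPn(Y)$ is in hand the optimisation is elementary, and degenerate cases in which a projective Fisher information vanishes are read off directly.
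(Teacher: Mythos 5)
Your proof is correct, but it takes a genuinely different route from the paper. The paper does not reopen the score-function machinery at all: it applies the Blachman--Stam inequality as a black box to the anisotropically scaled vectors $T_m(X)$ and $T_m(Y)$, where $T_m = \mathrm{I}_n + (\frac{1}{m}-1)e_ne_n^T$, computes $I(T_m(X)) = \int \bigl(\sum_{i<n}(\partial_i f)^2 + m^2(\partial_n f)^2\bigr)/f$, divides by $m^2$ and lets $m\to\infty$, so that only the $e_n$-component of the score survives in the limit. You instead rerun the classical proof of Blachman--Stam with the score projected onto $e_n$ from the start: the convolution identity $\rho_Z(Z)=\mathbb{E}[\alpha\rho_X(X)+\beta\rho_Y(Y)\mid Z]$, Jensen, the vanishing of the cross term by independence and $\int\partial_n f=0$, and optimisation over $\alpha+\beta=1$. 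Both arguments are sound. The paper's scaling trick is shorter and pushes all regularity concerns into the already-established Blachman--Stam inequality (the only analytic step being the interchange of the $m\to\infty$ limit with an inequality, which is immediate); your argument is self-contained, makes the mechanism transparent at the level of scores, and yields the general directional version $\IPu$ and the equality case with no extra work --- though, as you correctly flag, it obliges you to justify differentiation under the convolution integral and $\int\nabla f=0$ by mollification, which the paper's route sidesteps. One cosmetic slip: differentiating $f_Z(z)=\int f_X(z-y)f_Y(y)\,dy$ in $z$ gives $\rho_Z(z)=\mathbb{E}[\rho_X(X)\mid Z=z]$ (not the $\rho_Y$ identity, which comes from the symmetric representation); since you use both identities only through their convex combination, this does not affect the argument.
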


\begin{proof}

Let $T_m: \mathbb{R}^n \to \mathbb{R}^n$ be the linear operator

\[
T_m = \mathrm{I}_{n} + \left( \frac{1}{m} - 1 \right) e_n e_n^T,
\]
where $\mathrm{I}_{n}$ denotes the identity matrix. 
Then, writing $f$ for the density of $X$,
$T_{m}(X)$ has density
\[
 f_{T_{m}}(x) :=\frac{1}{\lvert T \rvert} f(T_{m}^{-1} x), x \in \mathbb{R}^n .
\]
Thus, the Fisher information $I(T_{m}(X))$ is given by
\[
I(T_{m}(X)) = \int_{\mathbb{R}^{n}} \frac{\lvert \nabla f_{T_{m}}(x) \rvert^2}{f_{T_{m}(x)}} \, dx
= \int_{\mathbb{R}^{n}} \frac{\lvert (T^{-1})^T \nabla f(y) \rvert^2}{f(y)} \, dy
= \int_{\mathbb{R}^{n}} \frac{\sum_{i=1}^{n-1} (\frac{\partial}{\partial y_i} f)^2 + m^2 (\frac{\partial}{\partial y_n} f)^2}{f(y)} \, dy.
\]

Therefore, using the finiteness of the Fisher information of $X$,
\[
\lim_{m \to \infty} \frac{I(T_{m}(X))}{m^2} = \int_{\mathbb{R}^{n}} \frac{(\frac{\partial}{\partial y_n} f)^2}{f(y)} \, dy = \IPn(X).
\]

Analogously, $\lim_{m \to \infty} \frac{I(T_{m}(Y))}{m^2}  = \IPn(Y)$ and $\lim_{m \to \infty} \frac{I(T_{m}(X+Y))}{m^2}  = \IPn(X+Y)$.

Now the result follows by applying the Blachman-Stam inequality to the independent random variables $T_m(X)$ and $T_m(Y)$, after dividing by $m^2$ and letting $m$ tend to infinity.
\end{proof}

\begin{remark}
An examination of the proof shows that the same inequality can be obtained for the more general functional $I_{P^u} := \int_{\mathbb{R}^n}\frac{\langle \nabla f, u \rangle ^2}{f(x)}  dx, u \in \mathbb{S}^{n-1}.$ Using the useful identity 
\begin{equation} \label{intonsphere}
    \int_{\mathbb{S}^{n-1}}{\inn{u}{v}^2d\sigma(u)} = \frac{\|v\|^2}{n},
\end{equation}
where $d\sigma$ denotes the uniform measure on the sphere and which holds for any vector $v$, we can recover the Blachman-Stam inequality. Indeed, by Minkowski's inequality for $p=-1,$

\begin{align*}
I(X+Y) &= n\int_{\mathbb{S}^{n-1}} \IPu(X+Y)) \, d\sigma(u) \leq n\int_{\mathbb{\mathbb{S}}^{n-1}} \left( \IPu(X))^{-1} + \IPu(Y))^{-1} \right)^{-1} \, d\sigma(u) \\
&\leq \left( \left( n\int_{\mathbb{S}^{n-1}} \IPu(X)) \, d\sigma(u) \right)^{-1} + \left( n\int_{\mathbb{S}^{n-1}} \IPu(Y))\, d\sigma(u) \right)^{-1} \right)^{-1} \\
&= \Bigl(I(X)^{-1} + I(Y^{-1})\Bigr)^{-1}.
\end{align*}

To justify \eqref{intonsphere} note that the left-hand side is independent of the direction of $v$, i.e. it only depends on $\|v\|$. Therefore, by homogeneity it equals 
$$
\|v\|^2 \int_{\mathbb{S}^{n-1}}{\inn{u}{e_i}^2d\sigma(u)}
$$
for every $i=1,\ldots,n$ and thus
$$
\int_{\mathbb{S}^{n-1}}{\inn{u}{v}^2d\sigma(u)} = \frac{\|v\|^2}{n}\sum_{i=1}^n{\int_{\mathbb{S}^{n-1}}{\inn{u}{e_i}^2d\sigma(u)}} = \frac{\|v\|^2}{n}\int_{\mathbb{S}^{n-1}}\|u\|^2d\sigma(u) = \frac{\|v\|^2}{n}.
$$
\end{remark}

 \newpage 
\begin{appendices}
\appendixpage
\section{Basic properties of differential entropy}

In this appendix we summarize a few properties of entropy that we used in the proofs. 

We recall here that the differential entropy of a vector $X$ with density $f$ on $\mathbb{R}^n$ is 
$$
h(X) = h(f) := -\int_{\mathbb{R}^n}{f(x)\log{f(x)} dx} \quad \in [-\infty, \infty].
$$
A change of variables in the integral shows that the differential entropy satisfies, for any constant matrix $A \in \mathbb{R}^{n \times n}$, the scaling property
\begin{equation}
    h(AX) = h(X) + \log{|\det{(A)}|}.
\end{equation}

If $X \in \mathbb{R}^n,Y \in \mathbb{R}^m$ are two random variables (resp. vectors) with joint density $f_{X,Y}$ on $\mathbb{R}^{m+n}$ then the conditional density of $Y$ given $X$ exists a.s. and is given for any $y \in \mathbb{R}^m$ by 
$$
f_{X|Y}(x|y) := \frac{f_{X,Y}(x,y)}{f_Y(y)},
$$
where $f_Y(y) = \int{f_{X,Y}(x,y) dx}$ is the marginal of $Y$.
Then the joint and conditional entropies are defined respectively as 
$$
h(X,Y) := -\int_{\RL^{m+n}}{f(x,y)\log{f(x,y)}dxdy}
$$
and
$$
h(X|Y) := -\int_{\RL^{n+m}}{f_{X,Y}(x,y)\log{f_{X|Y}(x|y)}dxdy}.
$$
The conditional entropy can also be re-written as 
\begin{align} \nonumber
h(X|Y) &= -\int_{\RL^{n+m}}{f_{X,Y}(x,y)\log{f_{X|Y}(x|y)}dxdy} \\ \label{condHexpression}
&= -\int_{\RL^{m}} {f_Y(y)\int_{\RL^n}{f_{X|Y}(x|y)\log{f_{X|Y}(x|y)}dx}dy} \\ \label{condHaverage}
&= \int_{\RL^m}{f_Y(y)h(X|Y=y)}dy, 
\end{align}
where we denote, for $y \in \RL^m$, $h(X|Y=y) := h\bigl(f_{X|Y}(\cdot|y)\bigr).$ More generally, if $Y$ does not have a density, \eqref{condHaverage} can be replaced by 
$$
h(X|Y)=\mathbb{E}_Y{h(X|Y=y)} = \int{h(X|Y=y) d\mu_Y(y)}
$$
where $\mathbb{E}_Y$ denotes the expectation with respect to the law, $\mu_Y$, of $Y$.

The chain rule for entropy \cite{cover:book} is 
\begin{equation} \label{chainrule}
h(X,Y) = h(Y) + h(X|Y) = h(X) + h(Y|X).
\end{equation}

We also have that conditioning reduces entropy, i.e. 
\begin{equation} \label{conditioningreducesH}
h(X|Y) \leq h(X),
\end{equation}
since the difference between the two sides is \cite{cover:book}
\begin{equation}
    \int_{\mathbb{R}^{m+n}}{f_{X,Y}(x,y)\log{\frac{f_{X,Y}(x,y)}{f_{X}(x)f_Y(y)}}dxdy} \geq 0,
\end{equation}
by convexity of $t \to t\log{t}$ and Jensen's inequality (in fact the last expression is known as the mutual information $I(X;Y)$). We have equality in \eqref{conditioningreducesH} if and only if $X,Y$ are independent. 

Inequality \eqref{conditioningreducesH} has the conditional version
\begin{equation}
    h(X|Y,Z) \leq h(X|Z) 
\end{equation}
with equality if and only if $X$ and $Y$ are conditionally independent given $Z$, i.e. if and only if for a.e. $x,y,z,$ $f_{X,Y|Z}(x,y|z) = f_{X|Z}(x|z)f_{Y|Z}(y|z),$ or equivalently $f_{X|Y,Z}(x|y,z) = f_{X|Y}(x|y)$.

\end{appendices}

\bibliographystyle{IEEEtran}
\bibliography{references}

\end{document}